\documentclass[letterpaper, 10 pt, conference]{ieeeconf}

\IEEEoverridecommandlockouts
\usepackage{graphics}
\usepackage{times} 
\usepackage{amsmath} 
\usepackage{amssymb} 
\usepackage{mathtools}
\usepackage{mathrsfs}
\usepackage{microtype}
\usepackage[noadjust]{cite}
\usepackage{xcolor}
\usepackage{bm}
\usepackage{tikz}
\usepackage{url}
\usepackage{siunitx}
\usepackage{booktabs}
\usepackage{multirow}
\usetikzlibrary{arrows.meta}

\title{\LARGE \bf
Data-Driven Power Flow for Radial Distribution Networks \\ with Sparse Real-Time Data}

\author{Oleksii Molodchyk, Omid Mokhtari, Samuel Chevalier, Mads R. Almassalkhi,
and Timm Faulwasser
\thanks{This research was (partially) funded in the course of TRR 391 \textit{Spatio-temporal
Statistics for the Transition of Energy and Transport} (520388526) by the Deutsche
Forschungsgemeinschaft (DFG).}
\thanks{O. Molodchyk and T. Faulwasser are with Institute of Control Systems, Hamburg
University of Technology, 21073 Hamburg, Germany.
{\tt\small oleksii.molodchyk@tuhh.de, timm.faulwasser@ieee.org}}
\thanks{O. Mokhtari, S. Chevalier, and M. Almassalkhi are with the Department of
Electrical and Biomedical Engineering, University of Vermont, Burlington, VT
05405, USA. {\tt\small \{omokhtar, schevali, malmassa\}@uvm.edu}}
}

\newcommand{\iunit}{\mathbf{j}}
\newtheorem{lemma}{Lemma}
\newtheorem{problem}{Problem}
\newtheorem{definition}{Definition}
\newtheorem{remark}{Remark}
\DeclareMathOperator{\vstack}{col}
\DeclareMathOperator{\rank}{rank}

\begin{document}
\maketitle
\thispagestyle{empty}
\pagestyle{empty}

\begin{abstract}
    Real-time control of distribution networks requires accurate information
    about the system state. In practice, however, such information is difficult
    to obtain because real-time measurements are available only at a limited
    number of locations. This paper proposes a novel data-driven power flow (DDPF) framework for balanced radial distribution networks. The proposed algorithm combines the behavioral approach with the DistFlow model and leverages offline historical data to solve power flow problems using only a limited set of real-time measurements. To design DDPF under sparse measurement conditions,
    we develop a sensor placement problem based on optimal network reductions. This allows us to
    determine sensor locations subject to a predefined sensor budget and to explicitly account for the radial nature of distribution networks. Unlike approaches
    that rely on full observability, the proposed framework is designed for
    practical distribution grids with sparse measurement availability.
    This enables data-driven power flow for real-time operation while reducing the number of required sensors. On several test cases, the proposed DDPF algorithm could demonstrate accurate voltage magnitude predictions, with a maximum error less than 0.001 p.u., with as little as 25\% of total locations equipped with sensors.
\end{abstract}

\section{INTRODUCTION} \label{sec:intro}
Model-based methods are widely used for monitoring, operation, and control
in power systems. These methods rely on explicit representations of network physics and operational constraints. Their use in real-time operation of distribution networks, however, becomes increasingly challenging with limited real-time data. This is in part due to the sheer size of medium- and low-voltage systems.
For example, according to the 2025 report~\cite[Table~22]{monitoringbericht2025}, German distribution networks
span close to two million kilometers in total length with thousands of electrical nodes. At the same time, operating conditions in such networks  can change rapidly due to fluctuations in load, distributed generation, and switching actions.
These rapid operating point variations require frequent updates of the network states for online monitoring. Even when an accurate network model is available, repeatedly solving model-based optimization or power flow  problems may be difficult to carry out in real-time, especially in large-scale distribution systems with nonlinear AC behavior.
This motivates the use of decomposition techniques and data-driven methods that recover system states directly from measured data. The latter is the focus of this work.

Among data-driven approaches for dynamical systems, the behavioral approach has recently received significant attention in the power-systems community \cite{markovskyDataDrivenControlBased2023, liDataBasedPredictiveControl2025, otzenDataDrivenOptimalPower2025}. In this framework, a system is characterized by the set of trajectories it can generate.
One of the most well known cases, in which this perspective is applicable is the case of linear time-invariant (LTI) systems. Specifically,~\cite{willemsNotePersistencyExcitation2005}
has shown that any valid input-output trajectory can be selected from a span of
finitely many pre-recorded trajectories of the same length.
The key advantage of this approach is that it bypasses
the intermediate step of system identification (or parameter estimation) potentially
making the adaptation of the controller to new data significantly faster.

Although the behavioral view allows one to quickly respond to new conditions by swapping the trajectory data, it comes with notable limitations. In particular, describing spaces of trajectories generated by general non-linear (static or dynamic) systems is non-trivial. Therefore, a lot of works on data-driven power flow focus on linear approximations allowing for a simplified analysis, e.g., \cite{molodchykDataDrivenMultiStageOPF2025}
uses ``DC'' power flow, while \cite{liDataBasedPredictiveControl2025}
utilizes LinDistFlow, a linear approximation centered around the DistFlow
equations \cite{baranOptimalCapacitorPlacement1989}. A work that stands out in
this context is \cite{otzenDataDrivenOptimalPower2025}, which draws
inspiration from \cite{jabrRadialDistributionLoad2006} and uses a conic
power flow model. However, to enable a trajectory-based description of
\cite{jabrRadialDistributionLoad2006}, the authors of \cite{otzenDataDrivenOptimalPower2025} impose  restrictive
assumptions: constant voltage magnitudes  for all nodes, voltage angles measured at every node, and reactive power flows are not accounted for.

In this work, we consider data-driven models for balanced, radial distribution networks.
As a first contribution, we extend the data-driven
approach from \cite{otzenDataDrivenOptimalPower2025} to the (non-linear)
DistFlow model, which allows the monitoring of voltage magnitudes, active, and reactive
power flows. We prove rigorous equivalence of our data-driven model with the radial Distflow model. However, this equivalence is only valid under complete measurement conditions (i.e., every state is measured). Thus, as a second contribution, we extend the DDPF methodology to networks with  sparse real-time data, i.e., we consider cases when the
outputs (i.e., voltages, power flows, etc.) cannot be measured at every location.
Leveraging \emph{historical}
data at every node, we select a suitable subset of sensor locations
considering a user-defined sensor budget. We then pass the \emph{online, real-time}
data measured only at the selected, budget-restricted locations to our proposed
data-driven DistFlow. Our numerical experiments suggest that for a practical range of sensor budgets, the solutions produced by the data-driven algorithm
remain meaningfully close to those obtained using full model information.

The rest of the paper is organized as follows. In Section~\ref{sec: prem}, we formulate the data-driven and DistFlow models and define the problem. Section~\ref{sec: dddf} develops the main methodology, including the data-driven DistFlow representation and the sensor placement framework. Section~\ref{sec: test case} presents numerical examples to illustrate the performance of the proposed approach. We conclude the paper in Section~\ref{sec: conclusion} where we discuss future research directions.

\section{PRELIMINARIES and PROBLEM STATEMENT} \label{sec: prem}
The trajectory-based view is well known to be particularly
attractive for discrete-time LTI systems of the form
\begin{equation}
    \label{eq:lti_system}
    \begin{array}{l l}
        \bm{x}(t+1) & = A \bm{x}(t) + B \bm{u}(t), \quad \bm{x}(0) \doteq x^0, \\
        \bm{y}(t)   & = C \bm{x}(t) + D \bm{u}(t).
    \end{array}
\end{equation}
Here, $x^{0}\in \mathbb{R}^{n_x}$ is some initial state, whereas the state $\bm{x}(t) \in \mathbb{R}^{n_x}$, input $\bm{u}(t) \in \mathbb{R}^{n_u}$, and output $\bm{y}(t) \in \mathbb{R}^{n_y}$ are coupled using real matrices
$A$, $B$, $C$, and $D$ for all $t \in \mathbb{N}_{0}$. We use the boldface symbols to distinguish trajectories from individual
signal values.

Traditionally, many works on behavioral theory target dynamical systems. However, in this paper, we focus on purely algebraic equations without coupling across time steps. Hence, as a special, \emph{state-less} case, we first consider a \emph{static system} \eqref{eq:lti_system}
with $n_{x}= 0$, i.e., reduced to a memory-less linear system of
equations
\begin{equation}
    \label{eq:lti_static}\bm{y}(t) = D \bm{u}(t).
\end{equation}

Representing an abstract, infinitely long trajectory
$\bm{w}: \mathbb{N}_{0}\to \mathbb{R}^{n_w}$ might not always be tractable. Hence,
for some $T \in \mathbb{N}$, we introduce the (vectorized) restriction $\bm{w}
    _{[0, T-1]}\doteq \vstack(\bm{w}(t))_{t=0}^{T-1}\in \mathbb{R}^{T n_w \times 1}$ of $\bm{w}$
to a $T$-long window $\{0, \ldots, T -1\} \subset \mathbb{N}_{0}$. Such
finite-length trajectories can be packaged into Hankel matrices.
\begin{definition}[Hankel matrix]
    Let $\bm{w}_{[0, T-1]}$ be a $T$-long trajectory. We define the Hankel
    matrix of depth $L \leq T$ as
    \[
        \mathscr{H}_{L}(\bm{w}_{[0, T-1]}) \doteq
        \begin{bmatrix}
            \bm{w}_{[0,L-1]} & \bm{w}_{[1,L]} & \cdots & \bm{w}_{[T-L,T-1]}
        \end{bmatrix},
    \]
    whose dimension is given by $L n_{w}\times (T-L+1)$. \hfill{\small{$\Box$}}
\end{definition}

The space of all possible fixed-length input-output trajectories of \eqref{eq:lti_system}
can be defined using Hankel matrices of a longer, pre-recorded trajectory of
\eqref{eq:lti_system}. This result is known in the systems and control
literature as the \emph{fundamental lemma}
\cite{willemsNotePersistencyExcitation2005, markovskyIdentifiabilityBehavioralSetting2023}. We state its restriction to static systems of form \eqref{eq:lti_static}.
\begin{lemma}[{\cite[Section~2]{willemsNotePersistencyExcitation2005}}]
    \label{lem:fl} Consider a pre-recorded input-output trajectory $(\bm{u}^{\mathrm{d}}
        _{[0,T-1]}, \bm{y}^{\mathrm{d}}_{[0,T-1]})$ generated by \eqref{eq:lti_static}. Let $L$ be
    a non-negative integer.
    If the pre-recorded data
    satisfies
    \[
        \rank
        \begin{bmatrix}
            \mathscr{H}_{L}(\bm{u}^{\mathrm{d}}_{[0,T-1]}) \\
            \mathscr{H}_{L}(\bm{y}^{\mathrm{d}}_{[0,T-1]})
        \end{bmatrix}
        = Ln_{u},
    \]
    then an $L$-long trajectory $(\bm{u}_{[0,L-1]}, \bm{y}_{[0,L-1]})$ can
    be generated by \eqref{eq:lti_system} if and only if
    \begin{equation}
        \label{eq:fl}
        \begin{bmatrix}
            \bm{u}_{[0,L-1]} \\
            \bm{y}_{[0,L-1]}
        \end{bmatrix}
        =
        \begin{bmatrix}
            \mathscr{H}_{L}(\bm{u}^{\mathrm{d}}_{[0,T-1]}) \\
            \mathscr{H}_{L}(\bm{y}^{\mathrm{d}}_{[0,T-1]})
        \end{bmatrix}
        g
    \end{equation}
    holds for some column selection vector $g \in \mathbb{R}^{T-L+1}$.
    \hfill {\small{$\Box$}}
\end{lemma}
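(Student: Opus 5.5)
The plan is to prove both directions of the equivalence using only linearity of the static map $\bm{y}(t) = D\bm{u}(t)$ and a dimension count. Since there is no state, an $L$-long trajectory $(\bm{u}_{[0,L-1]},\bm{y}_{[0,L-1]})$ is generated by \eqref{eq:lti_static} exactly when $\bm{y}_{[0,L-1]} = (I_L\otimes D)\,\bm{u}_{[0,L-1]}$. Here $\bm{u}_{[0,L-1]}\in\mathbb{R}^{Ln_u}$ is arbitrary. So the set $\mathcal{B}_L$ of $L$-long trajectories is the graph of the linear map $I_L\otimes D$, a subspace of $\mathbb{R}^{L(n_u+n_y)}$ of dimension exactly $Ln_u$. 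I also read the ``generated by \eqref{eq:lti_system}'' in the statement as \eqref{eq:lti_static}, i.e.\ the case $n_x=0$.

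\emph{Sufficiency (``if'').} Each column of the stacked Hankel matrix is a window $(\bm{u}^{\mathrm d}_{[i,i+L-1]},\bm{y}^{\mathrm d}_{[i,i+L-1]})$ of the recorded data. Because the system is static, and hence trivially time-invariant, $\bm{y}^{\mathrm d}(t)=D\bm{u}^{\mathrm d}(t)$ holds at every $t$, so every column lies in $\mathcal{B}_L$. Since $\mathcal{B}_L$ is a subspace, every linear combination of columns also lies in $\mathcal{B}_L$. Thus any vector of the form \eqref{eq:fl} is a valid trajectory, for every $g$. This direction needs no rank assumption.

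\emph{Necessity (``only if'').} From the previous step, the column space of the stacked Hankel matrix is contained in $\mathcal{B}_L$. The rank condition says this column space has dimension $Ln_u=\dim\mathcal{B}_L$. A subspace of equal finite dimension inside another must coincide with it, so the column space equals $\mathcal{B}_L$. Hence any valid trajectory admits some $g\in\mathbb{R}^{T-L+1}$ with \eqref{eq:fl}. Note that this step implicitly requires $T-L+1\ge Ln_u$ columns, which is guaranteed by the rank hypothesis.

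The only real subtlety is establishing $\dim\mathcal{B}_L = Ln_u$ exactly, rather than merely $\le$. This follows because $\mathcal{B}_L$ is the image of the injective map $\bm{u}\mapsto(\bm{u},(I_L\otimes D)\bm{u})$ on $\mathbb{R}^{Ln_u}$. The same observation shows that the rank condition is equivalent to $\rank\mathscr{H}_L(\bm{u}^{\mathrm d}_{[0,T-1]})=Ln_u$, which one can remark on for completeness. The edge case $L=0$ is vacuous.
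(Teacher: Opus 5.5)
Your proof is correct: the columns of the stacked Hankel matrix lie in the $Ln_u$-dimensional graph of $I_L\otimes D$, and the rank hypothesis forces equality of the two subspaces, which gives both directions. The ``if'' direction holds without the rank condition. The paper does not prove this lemma itself and simply cites Willems et al.; your subspace-inclusion plus dimension-count argument is exactly what remains once the rank condition is assumed, and it is the same reasoning the paper uses in its proof of Lemma~\ref{lem:dd_df}, where the Hankel columns are taken as a basis of the $(3n+1)$-dimensional nullspace of $M$.
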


Since the inputs and outputs of \eqref{eq:lti_static} are not coupled in time
it is enough to examine only unit-length trajectories ($L=1$) in the
context of Lemma~\ref{lem:fl}. Put differently, representing the behavior of \eqref{eq:lti_static}
boils down to collecting $n_{u}$ linearly independent input-output trajectories.
For example,
this approach has been applied to transmission networks by assuming data generated via a ``DC''
power flow model of the form \eqref{eq:lti_static} \cite{molodchykDataDrivenMultiStageOPF2025}.
Subsequently, we  discuss  distribution systems, which  require special care, especially if
detailed modeling including voltages is desired.

\subsection{Branch Flow Equations}
We consider a balanced radial AC power network at steady state, represented
by a tree-graph $(\mathcal{N}, \mathcal{E})$ with the set of nodes
$\mathcal{N}= \{0, 1, \ldots, n\}$ and edges
$\mathcal{E}\subset \mathcal{N}\times \mathcal{N}$. We fix the substation/slack
node $0 \in \mathcal{N}$ as the root and we use the shorthand $\mathcal{N}_{+}
    \doteq \mathcal{N}\setminus \{0\}$. The tree structure guarantees that each
node $i \in \mathcal{N}_{+}$ has a unique upstream parent
$j \in \mathcal{N}$ and a set of downstream children nodes. Each edge $(
    i,j)$ is directed such that $i$ points at its parent $j$, i.e., edges always point in the (upstream) direction towards the substation. Edges
are endowed with resistances $\texttt{r}_{ij}\geq 0$ and reactances $\texttt{x}_{ij}\geq 0$, resulting in impedances
$\lvert\texttt{z}_{ij}\rvert^{2}\doteq \texttt{r}_{ij}^{2}+ \texttt{x}_{ij}^{2}$ and $\lvert\texttt{z}_{ij}\rvert\neq 0$ for each $(i,j) \in \mathcal{E}$, which begets
the admittance matrix, $Y$.

Each $i \in \mathcal{N}$ is associated with the net apparent power injection
$p_{i}+ \iunit q_{i}\in \mathbb{C}$ and voltage phasor $V_{i}\in \mathbb{C}$.
For each $(i,j) \in \mathcal{E}$, we introduce the (sending-side) power flow
$P_{ij}+ \iunit Q_{ij}\in \mathbb{C}$. Throughout the paper, we make use of
the dependent variables $\ell_{ij}\in \mathbb{R}$ and
$\mathsf{v}_{i}\doteq \lvert V_{i}\rvert^{2}\in \mathbb{R}$ representing
the squared line current and nodal voltage magnitudes, respectively.

To describe the AC power flow, we introduce the DistFlow model
\cite{baranOptimalCapacitorPlacement1989} as the set of equations
\begin{subequations}
    \label{eq:df}
    \begin{alignat}
        {2}\label{eq:df_1} & P_{ij}= p_{i}+ \sum_{k: (k,i)\in \mathcal{E}}(P_{ki}- \texttt{r}_{ki}\ell_{ki}),                                         &  & \, \forall (i,j) \in \mathcal{E}, \\
        \label{eq:df_2}    & Q_{ij}= q_{i}+ \sum_{k: (k,i)\in \mathcal{E}}(Q_{ki}- \texttt{x}_{ki}\ell_{ki}),                                         &  & \, \forall (i,j) \in \mathcal{E}, \\
        \label{eq:df_3}    & \mathsf{v}_{j}= \mathsf{v}_{i}- 2(\texttt{r}_{ij}P_{ij}+\texttt{x}_{ij}Q_{ij})+\lvert\texttt{z}_{ij}\rvert^{2}\ell_{ij}, &  & \, \forall (i,j) \in \mathcal{E}, \\[8pt]
        \label{eq:df_4}    & \ell_{ij}\mathsf{v}_{i}= P_{ij}^{2}+Q_{ij}^{2},                                                                          &  & \, \forall (i,j) \in \mathcal{E}.
    \end{alignat}
\end{subequations}
Equations \eqref{eq:df_1} and \eqref{eq:df_2} describe the active and
reactive power balance, respectively; \eqref{eq:df_3} models the voltage drops
and \eqref{eq:df_4} ensures that each apparent power flow is equal to the
product of the respective voltage and current magnitudes.

It is well-known that DistFlow \eqref{eq:df} fully represents the power flows
of a single-phase equivalent model of a radial network~\cite{molzahnSurveyRelaxationsApproximations2019}.
As mentioned above, due to the nonlinearity of \eqref{eq:df_4} in the unknowns $\ell_{ij}$, $\mathsf{v}_i$, $P_{ij}$, and $Q_{ij}$ many papers
rely on linearized versions of DistFlow. In this paper, we establish a trajectory-based
equivalent of the nonlinear DistFlow \eqref{eq:df} akin to the model \eqref{eq:fl} for systems of
form \eqref{eq:lti_static}.

\subsection{Problem Statement}
\label{sub:prob} Since for each $i \in \mathcal{N}_{+}$, there exists only one $j \in \mathcal{N}$ such that $(i,j) \in\mathcal{E}$, we can abbreviate the edge-related quantities as
$P_{ij}\gets P_{i}$, $Q_{ij}\gets Q_{i}$ and $\ell_{ij}\gets \ell_{i}$. Now,
let the variables in \eqref{eq:df} be collected into (column) vectors
\begin{multline}
    \label{eq:measurements}p \doteq \vstack(p_{i})_{i \in \mathcal{N}_+},~q \doteq \vstack(
    q_{i})_{i \in \mathcal{N}_+},~P \doteq \vstack(P_{i})_{i \in \mathcal{N}_+}, \\
    Q \doteq \vstack(Q_{i})_{i \in \mathcal{N}_+},~ \ell \doteq \vstack(\ell_{i})_{i \in
    \mathcal{N}_+},~\mathsf{v}\doteq \vstack(\mathsf{v}_{i})_{i \in
    \mathcal{N}_+}.
\end{multline}
Applying the boldface notation for trajectories, we suppose that at time $t \in
    \mathbb{N}_{0}$, the system \eqref{eq:df} is excited with inputs
\begin{subequations}
    \label{eq:df_uy_data}
    \begin{equation}\label{eq:df_u_data}
        \bm{u}(t) \doteq \vstack(\bm{p}(t), \bm{q}(t)) \in \mathbb{R}^{2n}
    \end{equation}
    producing outputs
    \begin{equation}
        \label{eq:df_y_data}\bm{y}(t) \doteq \vstack(\bm{P}(t), \bm{Q}(t), \bm
            {\ell}(t), \bm{\mathsf{v}}(t), \bm{\mathsf{v}}_{0}(t)) \in \mathbb{R}
        ^{4n+1},
    \end{equation}
\end{subequations}
resulting in trajectories $\bm{u}: \mathbb{N}_{0}\to \mathbb{R}^{2n}$ and $\bm
    {y}: \mathbb{N}_{0}\to \mathbb{R}^{4n+1}$, respectively.

Note that the results available for LTI systems
do not readily extend to the nonlinear DistFlow \eqref{eq:df} due to the nonlinearity of \eqref{eq:df_4}. This
motivates the following problem statement.
\begin{problem}[Data-driven DistFlow]
\label{pr:dd_distflow} Let $\mathscr{B}_{\mathrm{DF}}\subset \mathbb{R}^{2n}
    \times \mathbb{R}^{4n + 1}$ be the manifold defined via
\[
    \mathscr{B}_\mathrm{DF} \doteq \left\{ \begin{array}{l}
        u = \vstack(p,q), \\
        y = \vstack(P,Q,\ell,\mathsf{v},\mathsf{v}_0)
    \end{array} \middle\vert ~\begin{array}{l}
        \text{DistFlow~\eqref{eq:df}} \\
        \text{satisfied}
    \end{array}  \right\}.
\]
Consider a pre-recorded, $T$-long input-output trajectory $(\bm{u}^{\mathrm{d}}_{[0,T-1]}, \bm{y}^{\mathrm{d}}
    _{[0,T-1]}) \in \mathscr{B}_{\mathrm{DF}} \times \dots \times \mathscr{B}_{\mathrm{DF}} \doteq \mathscr{B}_{\mathrm{DF}}^{T}$ structured according to
\eqref{eq:df_uy_data} and generated from \eqref{eq:df}. Determine
whether a candidate point
$(\tilde u,\tilde y) \in \mathbb{R}^{2n}\times \mathbb{R}^{4n + 1}$ belongs to
$\mathscr{B}_{\mathrm{DF}}$ based solely on the data
$(\bm{u}^{\mathrm{d}}_{[0,T-1]}, \bm{y}^{\mathrm{d}}_{[0,T-1]})$. \hfill
{\small{$\Box$}}
\end{problem}

Note that to record the output data $\bm{y}^{\mathrm{d}}_{[0,T-1]}$ in Problem~\ref{pr:dd_distflow}, in addition to measuring the slack voltage $\bm{\mathsf{v}}_0^{\mathrm{d}}(t)$, one has to place $n$ sensors across all nodes in $\mathcal{N}_+$ collectively measuring the 4-tuples
\begin{subequations}
    \begin{equation}\label{eq:y_measurements}
        (\bm{P}_i^{\mathrm{d}}(t), \bm{Q}_i^{\mathrm{d}}(t), \bm{\ell}_i^{\mathrm{d}}(t), \bm{\mathsf{v}}_i^{\mathrm{d}}(t)) \in \mathbb{R}^4, \quad \forall i \in \mathcal{N}_+,
    \end{equation}
    at every time $t \in \{0, \ldots, T-1\}$. Similarly, for each time point, the system inputs $\bm{u}^{\mathrm{d}}_{[0,T-1]}$ need to be collected via the tuples
    \begin{equation}\label{eq:u_measurements}
        (\bm{p}_i^{\mathrm{d}}(t), \bm{q}_i^{\mathrm{d}}(t)) \in \mathbb{R}^2, \quad \forall i \in \mathcal{N}_+.
    \end{equation}
\end{subequations}

As explained in Section~\ref{sec:intro}, the availability of \emph{real-time} measurements at every node of the network is challenging to achieve at distribution level. This leads to a problem formulation, which relaxes this assumption.

Consider a subset $\mathcal{R}\subset \mathcal{N}$ of the network nodes such that $\mathcal{R} \ni 0$ and let $\mathcal{R}_+ \doteq \mathcal{R} \setminus \{0\}$. Assume that sensors can only be hosted at nodes in $\mathcal{R}$. This means that as opposed to \eqref{eq:y_measurements}, the output vector now collects $\bm{\mathsf{v}}_0^{\mathrm{d}}(t)$ and
\begin{equation}
    (\bm{P}_i^{\mathrm{d}}(t), \bm{Q}_i^{\mathrm{d}}(t), \bm{\ell}_i^{\mathrm{d}}(t), \bm{\mathsf{v}}_i^{\mathrm{d}}(t)) \in \mathbb{R}^4, \quad \forall i \in \mathcal{R}_+,
\end{equation}
at each $t \in \{0, \ldots, T-1\}$. Thus, we denote the associated new, reduced output vector as
\[
    \bm{y}_\mathrm{r}^{\mathrm{d}}(t) \doteq \vstack(\bm{P}_\mathrm{r}^{\mathrm{d}}(t), \bm{Q}_\mathrm{r}^{\mathrm{d}}
    (t), \bm{\ell}_\mathrm{r}^{\mathrm{d}}(t), \bm{\mathsf{v}}_\mathrm{r}^{\mathrm{d}}(t), \mathsf{v}
    _{0}^{\mathrm{d}}(t)),
\]
where the subscript ``r'' denotes the restriction of \eqref{eq:measurements} to $\mathcal{R}_+$.

\begin{problem}[Relaxation to Sparse Real-Time Data] \label{pr:dd_distflow_reduced}
Given data
$(\bm{u}^{\mathrm{d}}_{[0,T-1]}, \bm{y}^{\mathrm{d}}_{\mathrm{r},[0,T-1]})$ with full inputs but reduced outputs. For an input vector (operating point) $\tilde u \in \mathbb{R}^{2n}$ of nodal power injections and some constant slack voltage,
\begin{itemize}
    \item[i)] Reconstruct the \textit{full} vector of voltage magnitudes $\lvert\tilde{V}(\tilde u)\rvert  = \vstack(\sqrt{\mathsf{v}_i})_{i \in \mathcal{N}_+} \in \mathbb{R}^n$ corresponding to $\tilde u$.
    \item[ii)] For a given \emph{sensor budget} $\beta^{\text{max}}$ and operating range $\mathbb{U} \subset \mathbb{R}^n \times \mathbb{R}^n$ of interest, find the set $\mathcal{R} \subset \mathcal{N}$ that solves the $\infty$-norm error problem
          \begin{equation}
              \begin{aligned}
                  \underset{\mathcal{R\subset\mathcal{N}}}{\text{minimize}} \quad & \max_{\tilde u \in \mathbb{U}}\lVert \lvert\tilde{V}(\tilde u)\rvert - \lvert V_\mathrm{true} (\tilde u)\rvert\rVert_\infty \\
                  \text{subject to:} \quad                                        & \lvert \mathcal{R} \rvert \leq \beta^{\text{max}},
              \end{aligned}
          \end{equation}
          where $\lvert V_\mathrm{true}(\tilde u)\rvert \in \mathbb{R}^n$ are the voltage magnitudes resulting from the \emph{full} DistFlow \eqref{eq:df} solved for $\tilde u$. \hfill {\small{$\Box$}}
\end{itemize}
\end{problem}

Note that reconstructing all voltage magnitudes from a dataset with reduced outputs $\bm{y}^{\mathrm{d}}_{\mathrm{r},[0,T-1]}$ is not possible, in general, without having the model $\eqref{eq:df}$. Hence, we propose a framework that begets a meaningful approximation.

\section{MAIN RESULTS -- DATA-DRIVEN DISTFLOW} \label{sec: dddf}
First, to tackle Problem~\ref{pr:dd_distflow} we start with a crucial observation: In the DistFlow equations
\eqref{eq:df}, the model topology (i.e., the underlying tree graph and
the associated line impedances) only affects the equations in the linear part \eqref{eq:df_1}--\eqref{eq:df_3}. The nonlinear equations  \eqref{eq:df_4} only depend on the unknowns and do not include any of the network parameters.

The following result proposes to learn the corresponding subspace using a rationale
similar to Lemma~\ref{lem:fl}.
\begin{lemma}[Data-driven DistFlow]
    \label{lem:dd_df} Consider an input-output trajectory $(\bm{u}^{\mathrm{d}}
        _{[0,T-1]}, \bm{y}^{\mathrm{d}}_{[0,T-1]}) \in \mathscr{B}_{\mathrm{DF}}^{T}$
    of length $T>1$, structured according to \eqref{eq:df_uy_data}
    and generated from \eqref{eq:df}. Assume that the trajectory satisfies
    \begin{equation}
        \label{eq:dd_df_pe}\rank
        \begin{bmatrix}
            \mathscr{H}_{1}(\bm{u}^{\mathrm{d}}_{[0,T-1]}) \\
            \mathscr{H}_{1}(\bm{y}^{\mathrm{d}}_{[0,T-1]})
        \end{bmatrix}
        = 3n + 1.
    \end{equation}
    Then, a candidate point
    $(u,y) \in \mathbb{R}^{2n}\times \mathbb{R}^{4n + 1}$ with
    $u=\vstack(p,q)$ and
    $y = \vstack(P, Q, \ell, \mathsf{v}, \mathsf{v}_{0})$ belongs to
    $\mathscr{B}_{\mathrm{DF}}$ if and only if there exists a vector
    $g \in \mathbb{R}^{T}$ such that
    \begin{subequations}
        \label{eq:dd_df}
        \begin{align}
            \label{eq:dd_df_lin}\begin{bmatrix}u \\ y\end{bmatrix} & = \begin{bmatrix}\mathscr{H}_{1}(\bm{u}^{\mathrm{d}}_{[0,T-1]}) \\ \mathscr{H}_{1}(\bm{y}^{\mathrm{d}}_{[0,T-1]})\end{bmatrix} g, \\
            \label{eq:con_cones}\mathsf{v}\odot \ell               & = P \odot P + Q \odot Q,
        \end{align}
    \end{subequations}
    where $\odot$ denotes the element-wise product. \hfill {\small{$\Box$}}

    \begin{proof}
        Note that \eqref{eq:con_cones} is a concatenation of the $n$
        equations in \eqref{eq:df_4}. Therefore, it is left to prove that the
        set $\{(u,y) \mid \exists g~\text{s.t.}~\eqref{eq:dd_df_lin}~\text{is
                satisfied}\}$ is equal to the set of the solutions to \eqref{eq:df_1}--\eqref{eq:df_3}.
        We express \eqref{eq:df_1}--\eqref{eq:df_3} as a linear system
        \[
            M
            \begin{bmatrix}
                u \\
                y
            \end{bmatrix}
            = 0~\text{with}~M \in \mathbb{R}^{3n \times (6n+1)}.
        \]
        Since $(\mathcal{N}, \mathcal{E})$ is a fully connected tree, we have
        $\rank M = 3n$. Therefore, by the fundamental theorem of linear algebra,
        the dimension of the nullspace of $M$ is $6n+1 - \rank M = 3n + 1$. Hence,
        the columns of the Hankel matrices in \eqref{eq:dd_df_pe} can serve as
        the basis for the subspace defined by \eqref{eq:df_1}--\eqref{eq:df_3}.
    \end{proof}
\end{lemma}

Note that the condition \eqref{eq:dd_df_pe} induces a necessary data length
requirement $T \geq 3n +1$ that scales linearly with the number of nodes in
the network. However, the sizes of the Hankel matrices in \eqref{eq:dd_df_lin}
scale quadratically with $n$ since Lemma~\ref{lem:dd_df} assumes that each of
the $T$ measurements is placed at every node.

\subsection{Convex Relaxation of Data-Driven DistFlow}
The prospect of the data-driven model \eqref{eq:dd_df} is that one can employ it efficiently
in optimization algorithms.
For instance, the data-driven solution to DistFlow-based power flow for a tuple of nodal net injections $(p_{\bullet}, q_{\bullet}) \in \mathbb{R}^{n}\times
    \mathbb{R}^{n}$ can be computed by obtaining a minimizer to
\begin{subequations}
    \label{eq:dd_df_pf}
    \begin{alignat}{2}
         & \underset{\substack{u = \vstack(p,q),\; g                                                                                                                                      \\ y = \vstack(P,Q,\ell,\mathsf{v},\mathsf{v}_0)}}{\text{minimize}}
         & \quad                                     & \mathbf{1}^{\top}\ell
        \\
         & ~~\,\text{subject to:}
         &                                           & \mathsf{v} \geq 0, \quad \mathsf{v}_{0} = 1~\text{p.u.}, ~\eqref{eq:dd_df_lin}
        \label{eq:dd_df_pf_v0}
        \\
         &                                           &                                                                                & \vstack( p, q)
        \leq
        \vstack(p_{\bullet}, q_{\bullet}),
        \label{eq:dd_df_pf_soc_os}
        \\
         &                                           &                                                                                & P \odot P + Q \odot Q \leq \mathsf{v} \odot \ell.
        \label{eq:dd_df_pf_soc}
    \end{alignat}
\end{subequations}
Here, we embed the linear part of the data driven model as equality
constraints, while relaxing the quadratic condition \eqref{eq:con_cones} to
\eqref{eq:dd_df_pf_soc}. Together with the voltage non-negativity
requirement \eqref{eq:dd_df_pf_v0} this casts \eqref{eq:dd_df_pf} into a
second-order cone program. Although this vastly improves the efficiency
of the optimization, one has to remark that the solution to \eqref{eq:dd_df_pf}
is only physically relevant when all of the conic constraints in \eqref{eq:dd_df_pf_soc}
are active, i.e., when the relaxation \eqref{eq:dd_df_pf_soc} is exact.

In the model-based setting, i.e., when \eqref{eq:dd_df_lin} is replaced by \eqref{eq:df_1}--\eqref{eq:df_3},
sufficient conditions to guarantee the relaxation exactness have been studied
extensively, see, e.g., \cite{farivarInverterVARControl2011, farivarBranchFlowModel2013,
    farivarBranchFlowModel2013a, ganExactConvexRelaxation2015}. Below we leverage
techniques from \cite{farivarInverterVARControl2011, farivarBranchFlowModel2013} to obtain
the data-driven counterpart to model-based convex-relaxed DistFlow. In particular,
observe that similar to \cite{farivarInverterVARControl2011} we employ load
oversatisfaction as the inequality constraint in \eqref{eq:dd_df_pf_soc_os}.
This allows us to derive the following result.
\begin{lemma}[Relaxation exactness]
    Suppose that the data in \eqref{eq:dd_df_pf} is noise-free and
    persistently exciting in the sense of \eqref{eq:dd_df_pe}. Then the
    optimal solution
    $(P^{\star}, Q^{\star}, \mathsf{v}^{\star}, \ell^{\star})$ to \eqref{eq:dd_df_pf}
    satisfies
    \[
        P^{\star}\odot P^{\star}+ Q^{\star}\odot Q^{\star}= \mathsf{v}^{\star}
        \odot \ell^{\star},
    \]
    i.e., the relaxation \eqref{eq:dd_df_pf_soc} is exact. \hfill {\small{$\Box$}}

    \begin{proof}
        Closely following \cite{farivarInverterVARControl2011, farivarBranchFlowModel2013}
        we can prove this statement by contradiction. Specifically, assume that
        $\zeta^{\star}= (p^{\star}, q^{\star}, P^{\star}, Q^{\star}, \mathsf{v}
            ^{\star}, \mathsf{v}_{0}^{\star}, \ell^{\star}, g^{\star})$ is optimal
        in \eqref{eq:dd_df_pf} but there exists a node
        $i$ whose cone constraint is inactive, i.e.,
        \[
            P_{i}^{\star2}+ Q_{i}^{\star2}< \mathsf{v}_{i}^{\star}\ell_{i}^{\star}
            .
        \]
        Let $j \in \mathcal{N}$ be the parent of $i$. As in \cite{farivarInverterVARControl2011}, pick some
        $\varepsilon > 0$ and consider a perturbed point
        $\tilde{\zeta}= (\tilde{p}, \tilde{q}, \tilde{P}, \tilde{Q}, \tilde{\mathsf{v}}
            , \tilde{\mathsf{v}}_{0}, \tilde{\ell}, \tilde{g})$
        defined by $\tilde{\mathsf{v}}\doteq \mathsf{v}^{\star}$ and $ \tilde{\mathsf{v}}_{0}\doteq \mathsf{v}_{0}^{\star}$, while for each $k \in \mathcal{N}_+$, set
        \begin{align*}
            (\tilde{\ell}_{k}, \tilde{P}_{k}, \tilde{Q}_{k}) & \doteq
            \begin{cases}
                (\ell_{k}^{\star}-\varepsilon,P_{k}^{\star}-\frac{\texttt{r}_{ij}\varepsilon}{2},Q_{k}^{\star}-\frac{\texttt{x}_{ij}\varepsilon}{2}),~\text{if}~k=i, \\
                (\ell_{k}^{\star},P_{k}^{\star},Q_{k}^{\star}),~\text{otherwise,}
            \end{cases}
            \\
            (\tilde{p}_{k},\tilde{q}_{k})                    & \doteq
            \begin{cases}
                (p^\star_{k}-\frac{\texttt{r}_{ij}\varepsilon}{2}, q^\star_{k}-\frac{\texttt{x}_{ij}\varepsilon}{2}),~\text{if $k \in \{i,j\}$}, \\
                (p^\star_{k}, q^\star_{k}),~\text{otherwise}.
            \end{cases}
        \end{align*}
        Since $\zeta^{\star}$ is feasible by definition, it satisfies the Hankel
        matrix constraints \eqref{eq:dd_df_lin} with some $g^{\star}$, and therefore
        it also satisfies \eqref{eq:df_1}-\eqref{eq:df_3} by Lemma~\ref{lem:dd_df}.
        Its perturbed version $\tilde{\zeta}$ also satisfies \eqref{eq:df_1}-\eqref{eq:df_3}.
        Using Lemma~\ref{lem:dd_df} in the other direction, we establish the
        existence of $\tilde{g}\neq g^{\star}$ such that $\tilde{\zeta}$
        satisfies \eqref{eq:dd_df_lin}. Since $\texttt{r}_{ij}\geq 0$ and
        $\texttt{x}_{ij}\geq 0$, $\tilde{\zeta}$ also fulfills the
        load oversatisfaction inequality \eqref{eq:dd_df_pf_soc_os}. Finally,
        $\tilde{\zeta}$ has a strictly smaller objective than
        $\zeta^{\star}$ since
        $\mathbf{1}^{\top}\ell^{\star}> \mathbf{1}^{\top}\tilde{\ell}$. Hence,
        the contradiction.
    \end{proof}
\end{lemma}

\begin{remark}[Role of $g$ and slack voltage]
    Notice that if the slack voltage trajectory dataset is constant, i.e.,
    if $\bm{\mathsf{v}_0}^{\mathrm{d}}(t) = 1$~p.u. for all $t$, then due to the fixed slack voltage in \eqref{eq:dd_df_pf_v0},
    the column selection decision variable $g$ is forced to satisfy
    $\mathbf{1}^{\top}g = 1$. This is similar to the treatment of LTI systems
    under constant disturbance \cite[Thm.~1]{berberichLinearTrackingMPC2022}.
    \hfill {\small{$\Box$}}
\end{remark}

Thus, the data-driven power flow in~\eqref{eq:dd_df_pf} can incorporate the non-linear DistFlow formulation for balanced, radial distribution networks. However, DDPF implementation will require that all nodes $\mathcal{N}$ are measured in real-time, which is not realistic. To enable DDPF, we adapt work on optimal network reductions to formulate a sensor placement problem that determines the subset of nodes $\mathcal{R}\subset \mathcal{N}$ for which online measurements can be provided.

\subsection{Determining Online Measurement Locations}
We formulate a sensor placement problem based on Kron reduction, which can be derived from Kirchhoff’s Current Law (KCL),~$I = Y V$.
Consider a partition of the admittance matrix $Y$ into a set of measured nodes $\mathcal{R}$, and a set of unmeasured nodes $\mathcal{U}$. We eliminate the current injections of the unmeasured nodes by \textit{assigning} them to the measured nodes $\mathcal{R}$, such that $I_\mathcal{U} = 0$. Then, KCL can be rewritten as
\begin{align} \label{eq:Y-partitioned}
    \left[\begin{array}{c}
                  I_\mathcal{R} \\
                  \hline
                  0
              \end{array}\right]
    =
    \left[\begin{array}{c|c}
                  Y_{\mathcal{R}\mathcal{R}} & Y_{\mathcal{R}\mathcal{U}} \\
                  \hline
                  Y_{\mathcal{U}\mathcal{R}} & Y_{\mathcal{U}\mathcal{U}}
              \end{array}\right]
    \left[\begin{array}{c}
                  V_\mathcal{R} \\
                  \hline
                  V_\mathcal{U}
              \end{array}\right].
\end{align}
Eliminating $V_\mathcal{U}$ from \eqref{eq:Y-partitioned} through substitution yields the Kron-reduced system $I_\mathcal{R} = Y_{\rm Kron}V_\mathcal{R}$ with
\begin{align} \label{eq:Y-kron}
    ~Y_{\rm Kron} \doteq Y_{\mathcal{R}\mathcal{R}} - Y_{\mathcal{R}\mathcal{U}} Y_{\mathcal{U}\mathcal{U}}^{-1} Y_{\mathcal{U}\mathcal{R}}.
\end{align}
Here, $Y_{\rm Kron}$ captures the equivalent network among the measured nodes that correspond to the sensor locations.
Accordingly, we model a network with a sparse measurement set as a reduced network, where each unmeasured (reduced) node is assigned to a measured (kept) representative. The approximation we employ here is adapted from optimal Kron-based network reduction (Opti-KRON)~\cite{chevalierOptimalKronbasedReduction2022,mokhtariStructurepreservingOptimalKronbased2025, optikron3}.

Let
$\Pi \in \{0,1\}^{(n+1) \times (n+1)}$ denote the representative assignment matrix,
where $\Pi_{i,i}= 1$ indicates that node $i$ is selected for measurement. If
node $j$ is not selected for measurement and its injection is diverted to a measured node $i$,
then $\Pi_{j,j}= 0$ and $\Pi_{i,j}= 1$. Additionally, $\Pi_{i,j}= 1$ indicates that $V_j$ is approximated by $V_i$.
We frame the placement of online measurements as the optimization problem
\begin{subequations}
    \label{eq: optikron}
    \begin{align}
        \underset{\Pi , V}{\text{minimize}}\quad &
        \max_t \ \lVert (\Pi^{\top}|V_{[t]}| - |\hat{V}_{[t]}|) \rVert_{\infty}\label{eq: obj-optk}                                                             \\
        \text{subject to:}\quad                  & Y V_{[t]} = \Pi \hat{I}_{[t]} \quad \forall t \in \{1,...,T\} \label{eq: kcl2},                              \\& \text{tr}(\Pi) \le \beta^{\text{max}},\label{eq: budget}
        \\
                                                 & \Pi^{\top}\mathbf{1}= \mathbf{1},\label{eq: bin1}                                                            \\
                                                 & \Pi \le \text{diag}(\Pi)\mathbf{1}^{\top}, \label{eq: bin2}                                                  \\
                                                 & \Pi_{i,j} \le \Pi_{i,k}, \quad \forall i,j \in \mathcal{N},\ \forall k \in \mathcal{V}_{i,j},\label{eq: adj} \\
                                                 & \Pi_{i,k} = 0, \quad \forall i,k \in \mathcal{N}: k \notin \mathcal{D}_i. \label{eq: downstream}
    \end{align}
\end{subequations}
Here,
$\hat{V}_{[t]} \in \mathbb{C}^{(n+1)}~\text{and}~\hat{I}_{[t]} \in \mathbb{C}^{(n+1)}$ denote the complex nodal voltages and current injections obtained from historic measurements at time $t$ (i.e., a historical data scenario).
The objective is to identify the optimal sensor locations and assignments that minimize the maximum voltage magnitude difference between the reduced and full networks.
Constraint~\eqref{eq: kcl2} enforces KCL after nodal aggregation to predict effect on reduced voltages. The
sensor budget is enforced by~\eqref{eq: budget} with
$\beta^{\text{max}}< n+1$. Constraint~\eqref{eq: bin1} ensures
that each node is assigned to exactly one measured node, while measured nodes cannot be assigned to other nodes, as enforced by~\eqref{eq: bin2}.
We ensure each cluster forms a connected sub-network using~\eqref{eq: adj}, where $\mathcal{V}_{i,j}$ denotes the set of internal nodes on a path connecting nodes $i$ and $j$. Specifically, if node $j$ is assigned to node $i$, then all the nodes in $\mathcal{V}_{i,j}$ must also be assigned to node $i$. Finally, equation~\eqref{eq: downstream} enforces that each selected measured node is the upstream node of its cluster, so that the upstream line-flow measurement is consistent with the aggregated cluster injection.
Here, the set $\mathcal{D}_i$ contains all downstream nodes of node $i$.

There are two practical challenges associated with~\eqref{eq: optikron}: scalability and radiality. These issues are discussed next.

\subsubsection*{Scalability and Optimality}
Note that~\eqref{eq: optikron} is a mixed-integer program (MIP) which scales poorly in general. To speed up determining the reduced network, we simplify the search-space to a sequence of single one-node reduction decisions and iteratively reduce the network until we satisfy the sensor budget, i.e., $\text{tr}({\Pi}) = \beta_{\max}$. This iterative implementation can explicitly compute every admissible single-node reduction \emph{in parallel} (across both nodes and scenarios) and rank the decisions, select the optimal single-node reduction based on~\eqref{eq: obj-optk}, and perform Kron reduction. The process then repeats one node at a time until sensor budget is reached. This algorithm is detailed in~\cite[Algorithm~1]{optikron3} and provides a feasible Kron-based reduction for~\eqref{eq: optikron}.

Upon completion, the iterative algorithm returns the assignment matrix $\Pi^\star$ and resulting Kron-reduced network:
\begin{equation} \label{eq:r_star}
    \mathcal{R}^\star \doteq \{i \in \mathcal{N} \mid \Pi^\star_{i,i} = 1\}.
\end{equation}

However, due to the nature of Kron reductions, the network realized by $\mathcal{R}^\star$ will be a (dense) meshed network\footnote{In particular, the Kron reduction of a radial graph results in an edge-disjoint union of maximal cliques, where any clique with more than three nodes introduces cycles.} for which DistFlow assumptions do not hold. To overcome this, we apply the \emph{radialization} from~\cite{mokhtariStructurepreservingOptimalKronbased2025} to recover an equivalent radial network from $\mathcal{R}^\star$.

\subsubsection*{Radialization}
The ``radialization'' procedure proposed in~\cite{mokhtariStructurepreservingOptimalKronbased2025}
determines the smallest set of previously reduced nodes $\mathcal{R}_\mathrm{rad}$  that have to be re-introduced to guarantee that the equivalent network $Y_{\rm Kron}$ is radial. We consider this overhead in the final radial network realized by
\begin{equation} \label{eq:radialization}
    \mathcal{R} \doteq \mathcal{R}^\star \cup \mathcal{R}_\mathrm{rad},
\end{equation}
which represents the solution to the sensor placement problem.
Figure~\ref{fig: clustering} demonstrates the sensor placement procedure,
from the original network topology to the clustered representation. Each cluster
hosts exactly one sensor located at node $i$ measuring $(\bm{P}_i^{\mathrm{d}}
    (t),\bm Q_{i}^{\mathrm{d}}(t),\bm \ell_{i}^{\mathrm{d}}(t),\bm{\mathsf{v}}_{i}
    ^{\mathrm{d}}(t))$ at time $t$. Thus, unmeasured nodes in a cluster (dashed) are represented by their measured proxy $i$ (solid).

We summarize the considered DDPF methodology in Fig.~\ref{fig:rectangles}:
in total, one can think of four representations---two for the reduced vs.
full topology case, and two more but stated from a data-driven, behavioral
approach perspective.
\begin{figure}
    \centering
    \includegraphics[width=0.9\linewidth]{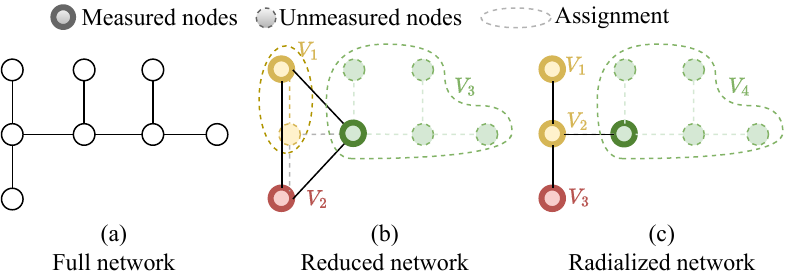}
    \caption{Different stages of the sensor placement process based on historical voltage and current data: (a) Original network
        topology. (b) Initial reduced network with three sensor locations. (c) Sensor locations
        after radialization.
    }
    \label{fig: clustering}
\end{figure}
\subsection{Data-Driven DistFlow with Reduced Measurements}
We arrive at the formulation of \eqref{eq:dd_df_pf} restricted to $\mathcal{R}$
\begin{alignat}{2} \label{eq:dd_pf_reduced}
     & \underset{\substack{u = \vstack(p,q),g, \ell_\mathrm{r}^\prime, \sigma_\ell,                                                                                                                                                                                                  \\ y_{\mathrm{r}}= \vstack(P_\mathrm{r},\ldots,\ell_\mathrm{r},\mathsf{v}_0)}}{\mathrm{minimize}}
     &                                                                              & \mathbf{1}^\top\ell^\prime_\mathrm{r}+f(P_\mathrm{r}, Q_\mathrm{r}) + \lambda_g \lVert g \rVert_2^2 + \lambda_\ell \lVert
    \sigma_\ell \rVert_2^2 \notag
    \\
     & ~~~~\text{subject to:}                                                       &                                                                                                                           & ~\text{constraints \eqref{eq:dd_df_pf_v0} and
    \eqref{eq:dd_df_pf_soc_os}}, \notag                                                                                                                                                                                                                                              \\
     &                                                                              &                                                                                                                           & ~
    \begin{bmatrix}
        u \\
        y_{\mathrm{r}}
    \end{bmatrix}
    =
    \begin{bmatrix}
        \mathscr{H}_{1}(\bm{u}^{\mathrm{d}}_{[0,T-1]}) \\
        \mathscr{H}_{1}(\bm{y}^{\mathrm{d}}_{\mathrm{r},[0,T-1]})
    \end{bmatrix}
    g,                                                                                                                                                                                                                                                                               \\
     &                                                                              &                                                                                                                           & ~\ell^\prime_\mathrm{r} = \ell_\mathrm{r} + \sigma_\ell, \notag    \\
     &                                                                              &                                                                                                                           & ~P_\mathrm{r} \odot P_\mathrm{r} + Q_\mathrm{r} \odot Q_\mathrm{r}
    \leq \mathsf{v}_\mathrm{r} \odot \ell_\mathrm{r}^\prime. \notag
\end{alignat}
In the above equation, besides reducing the outputs to $\mathcal{R}$, we introduce
a quadratic regularization on $g$ to avoid overfitting when selecting trajectories
from the image of the Hankel matrices. Additionally, since there is no counterpart of Lemma~\ref{lem:dd_df}
for the case of a limited sensor budget (see Fig.~\ref{fig:rectangles}), we introduce
a slack variable $\sigma_{\ell}$ in \eqref{eq:dd_pf_reduced} to avoid
running into infeasibility by adjusting the squared currents $\ell_\mathrm{r}$. To dis-incentivize
the load over-satisfaction introduced in \eqref{eq:dd_df_pf_soc_os}, we also modify
the objective by augmenting it with the regularization $
    f(P_\mathrm{r}, Q_\mathrm{r}) \doteq - \sum_{i: (i,0)\in \mathcal{E}}
    [P_{\mathrm{r},i} + Q_{\mathrm{r},i}]$
penalizing the power drawn from the slack node $0$.

\begin{figure}
    \centering
    \tikzstyle{every node}=[font=\small]
    \begin{tikzpicture}
        \definecolor{mylightblue}{RGB}{144, 213, 255}
        \definecolor{mylightgray}{RGB}{203,203,203}
        \def\w{2.5cm}
        \def\h{1.2cm}
        \def\gx{3cm}
        \def\gy{1cm}
        \fill[mylightblue] (0,\h+\gy) rectangle (\w, 2*\h+\gy);
        \node at (\w/2, \h+\gy+\h/2) {DistFlow~\eqref{eq:df}};
        \fill[mylightblue] (\w+\gx, \h+\gy) rectangle (2*\w+\gx, 2*\h+\gy);
        \node at
        (\w+\gx+\w/2, \h+\gy+\h/2)
        {\begin{tabular}{c}DistFlow on\\ reduced network\end{tabular}};
        \fill[mylightblue] (0,0) rectangle (\w, \h);
        \node at
        (\w/2, \h/2)
        {\begin{tabular}{c}Data-driven \\ DistFlow \eqref{eq:dd_df}\end{tabular}};
        \fill[mylightblue] (\w+\gx, 0) rectangle (2*\w+\gx, \h);
        \node at
        (\w+\gx+\w/2, \h/2)
        {\begin{tabular}{c} Data-driven \\ DistFlow on \\ reduced data\end{tabular}};
        \draw[
        mylightblue,
        line width=2pt,
        {Triangle[fill=mylightblue, scale=1]}-{Triangle[fill=mylightblue, scale=1]}
        ] (\w/4, \h+\gy) -- (\w/4, \h);
        \node[anchor=west] at (\w/3,\h+0.5\gy) {Lemma~\ref{lem:dd_df}};
        \draw[
        mylightblue,
        line width=2pt,
        -{Triangle[fill=mylightblue, scale=1]}
        ]
        (\w, \h+\gy+\h/2) -- (\w+\gx, \h+\gy+\h/2)
        node[midway, anchor=south, yshift=0.05cm]
            {\textcolor{black}{Optimization~\eqref{eq: optikron}}};
        \draw[
        mylightblue,
        line width=2pt,
        -{Triangle[fill=mylightblue, scale=1]}
        ]
        (\w, \h/2) -- (\w+\gx, \h/2)
        node[midway, anchor=south, yshift=0.05cm]
            {\textcolor{black}{\begin{tabular}{c}Sensor \\ placement\end{tabular}}};
    \end{tikzpicture}
    \caption{A summary of the considered DistFlow representations.}
    \label{fig:rectangles}
\end{figure}
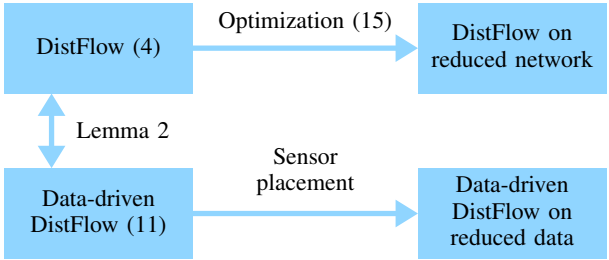

\section{NUMERICAL EXAMPLES} \label{sec: test case}
Although there is a lack of formal correspondence between the model-based
and data-driven DistFlow in the case of sparse real-time data,
we investigate whether the data-driven setting is able
to deliver reasonably accurate results in numerical experiments.

\subsubsection*{Setup and Test Cases}
We consider three radial networks with $47$, $85$, and $141$
nodes. The $47$-node case is taken from \cite{farivarInverterVARControl2011},
whereas the latter two networks are sourced from Matpower
\cite{Zimmerman2011}. The implementation is done in Julia and runs on a Intel Core i5-1335U processor with $16 \, \si{\giga \byte}$
RAM. The power flows in the model-based case with a given full
admittance matrix $Y$ (or a reduced admittance matrix $Y_\text{Kron}$),
are computed with a standard AC power flow feasibility problem and IPOPT \cite{Waechter2006}. For the conic problems in \eqref{eq:dd_df_pf} and
\eqref{eq:dd_pf_reduced} we use MOSEK \cite{mosek}.

For the considered networks, we use the German database in \cite{bdew2025slp}
to obtain standard profiles for three different types of nodes (household, commercial, and agricultural loads). Each node for each network
is randomly assigned to one of these three load categories.
This experiment design allows us to avoid having low-rank Hankel matrices.

The dataset is split into two parts: i) $T$-long training data for constructing
the Hankel matrices; and ii) test data of length $T_{\mathrm{test}}$ to provide
a series of operating points $(\bm{p}_{\bullet},\bm{q}_{\bullet})$ for DDPF based
on \eqref{eq:dd_pf_reduced}. The offline historical data to obtain network reductions
via \eqref{eq: optikron} is also taken from the training part of the dataset.

The training and test timeseries are both $24$ hours long with $15$-minute
time resolution.
The training set uses a workday in January (high-loading condition), and the test set a weekend day in July (low-loading condition).
It is important to note that for the proof-of-concept experiments presented here, all of the data points are taken as noise-free, i.e., we generate them by running AC power flow on the full network and we do not apply any further postprocessing.

\subsubsection*{Sensor Placement}
Table~\ref{tab: optikron_result} reports representative optimal sensor placement results for test cases in which the required measurement set achieves a maximum reconstruction error below $10^{-3}$ p.u. The reported number of sensors denotes the total number of measurements after radialization, while the added sensors correspond to the additional measurements introduced in \eqref{eq:radialization} after the initial optimization, cf.~Fig.~\ref{fig: clustering}. Our results suggest that the number of such additional measurements is less than $10\%$ of the total number of nodes and benefits from downstream constraint~\eqref{eq: downstream}, which appears to promote radiality. In practice, we recommend solving the sensor placement problem for multiple values of $\beta^{\max}$ and selecting a solution that satisfies the available sensing budget.

An example of the resulting sensor placement is shown in Fig.~\ref{fig: case141} for Case 141 from Table~\ref{tab: optikron_result}. The figure illustrates the selected sensor locations and their associated clusters in the original network, together with the corresponding Kron-reduced structures before and after radialization.
\begin{figure}
    \centering
    \includegraphics[width=\linewidth]{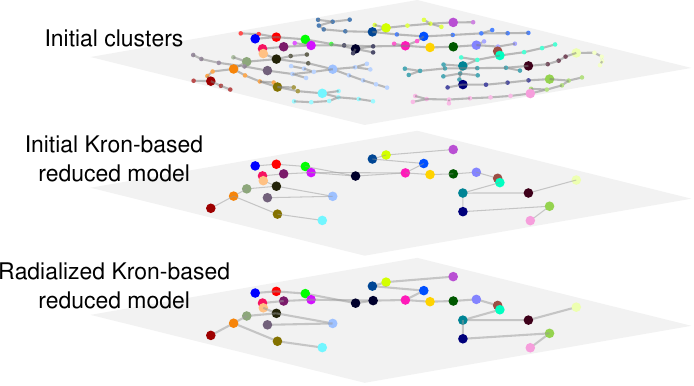}
    \caption{Sensor siting framework: selected sensor locations and their associated node clusters in the original network (top), Kron-based reduced network seen from the measured nodes in the original topology (middle), and radialized Kron-based reduced network seen from the measured nodes in the radialized topology (bottom).}
    \label{fig: case141}
\end{figure}
\begin{table}[t]
    \caption{Optimal sensor placement result examples}
    \label{tab: optikron_result}
    \centering
    \scriptsize
    \setlength{\tabcolsep}{4pt}
    \begin{tabular}{l|ccc}
        \toprule
        {Metric}                                                   & {Case 47}   & {Case 85}   & {Case 141}  \\
        \midrule
        {Reduction}                                                & {66.0\%}    & {54.1\%}    & {76.6\%}    \\
        \# of sensors $\lvert\mathcal{R}\rvert$                    & 16 (34.0\%) & 44 (45.9\%) & 33 (23.4\%) \\
        Radializing sensors $\lvert\mathcal{R}_\mathrm{rad}\rvert$ & 3           & 1           & 1           \\
        Largest cluster size                                       & 13          & 5           & 13          \\
        Max. Error \ (p.u.$\times 10^{-3}$)                        & 0.85        & 0.97        & 0.79        \\
        Computation time (s)                                       & 0.23        & 1.24        & 7.23        \\
        \bottomrule
    \end{tabular}
\end{table}

\subsubsection*{DDPF with Reduced Measurements}
Figure~\ref{fig:vm_errors} presents the maximum voltage magnitude prediction
error over all of the operation points in the test dataset. We sweep across different
sensor budgets, resulting in varying degrees of reduction represented in
percent via
\[
    \text{Reduction [\%]}= \frac{n+1 - \lvert\mathcal{R}\rvert}{n+1}\times 100.
\]
For each sensor budget setting, we solve the DDPF from \eqref{eq:dd_pf_reduced}
with regularization and slack parameters set to $\lambda_{g}\doteq 10^{-5}$
and $\lambda_{\ell}\doteq 10^{3}$, respectively. Comparing  this solution in terms of the computed voltage magnitudes to
running model-based power flow (full network with full admittance
matrix information $Y$) shows that  the errors remain in the
(acceptable) range of $[0,10^{-3}]$ p.u.; even for reductions of up to $40\%$
for all of three network cases. Additionally, it is worth mentioning that the times spent by MOSEK on solving the conic problem \eqref{eq:dd_pf_reduced} are negligible compared to the $15$-minute data sampling intervals.
For example, in the experiment from Figure~\ref{fig:vm_errors} for the $141$-node network, all 96 instances of \eqref{eq:dd_pf_reduced}
were solved within $100$ milliseconds for each degree of reduction.

To better understand how the errors are distributed across different nodes,
we examine one specific reduction instance. Figure~\ref{fig:case85_all_nodes}
displays the voltage magnitude errors recovered from a $60\%$-reduced 85-node
grid. Notably, DDPF appears to deliver biased estimations: for nearly all nodes
the voltages are consistently overestimated.
\begin{figure}
    \centering
    \includegraphics[width=\linewidth]{
        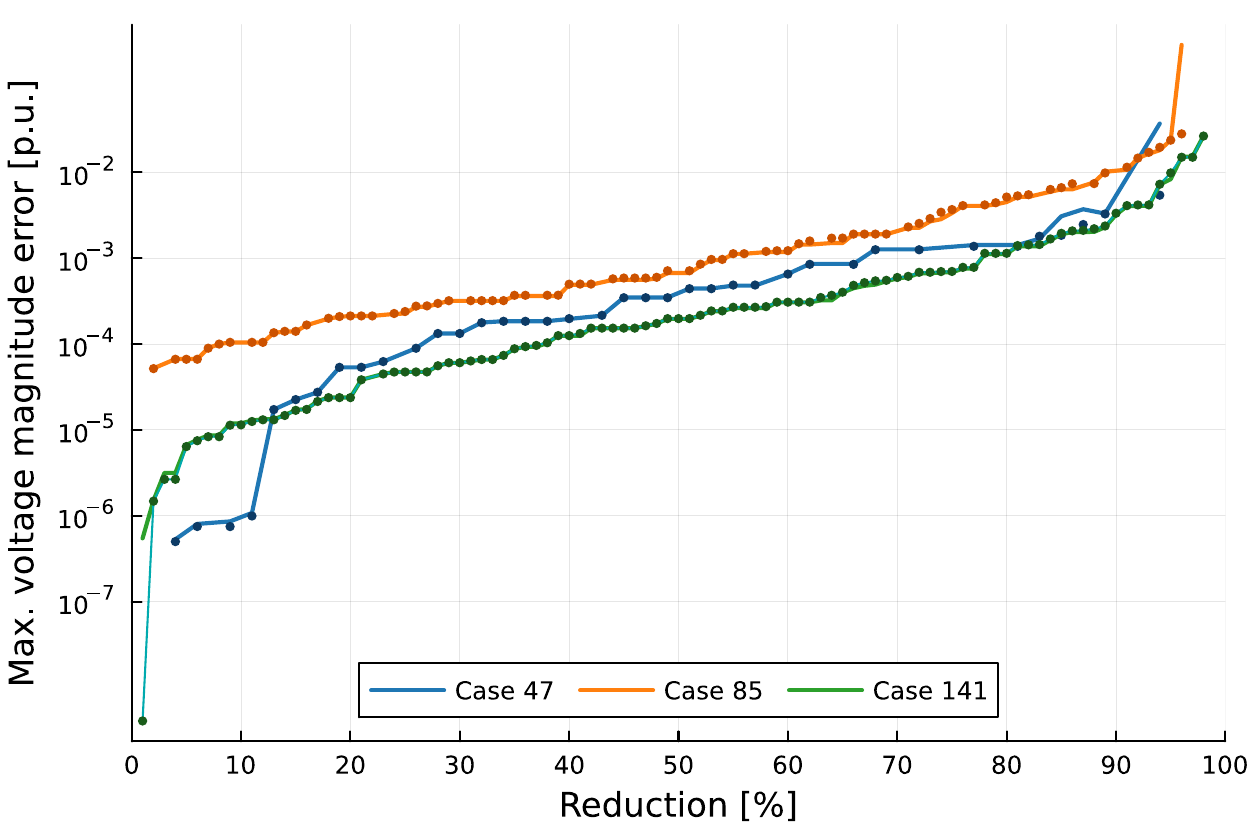
    }
    \caption{Maximum voltage magnitude errors over the test dataset and over
        varying degrees of reduction (solid lines represent the error resulting from DDPF,
        whereas darker-shaded bullets depict errors from model-based power flow
        calculations on networks with reduced models resulting from \eqref{eq: optikron} and subsequent radialization).}
    \label{fig:vm_errors}
\end{figure}

\begin{figure}
    \centering
    \includegraphics[width=\linewidth]{
        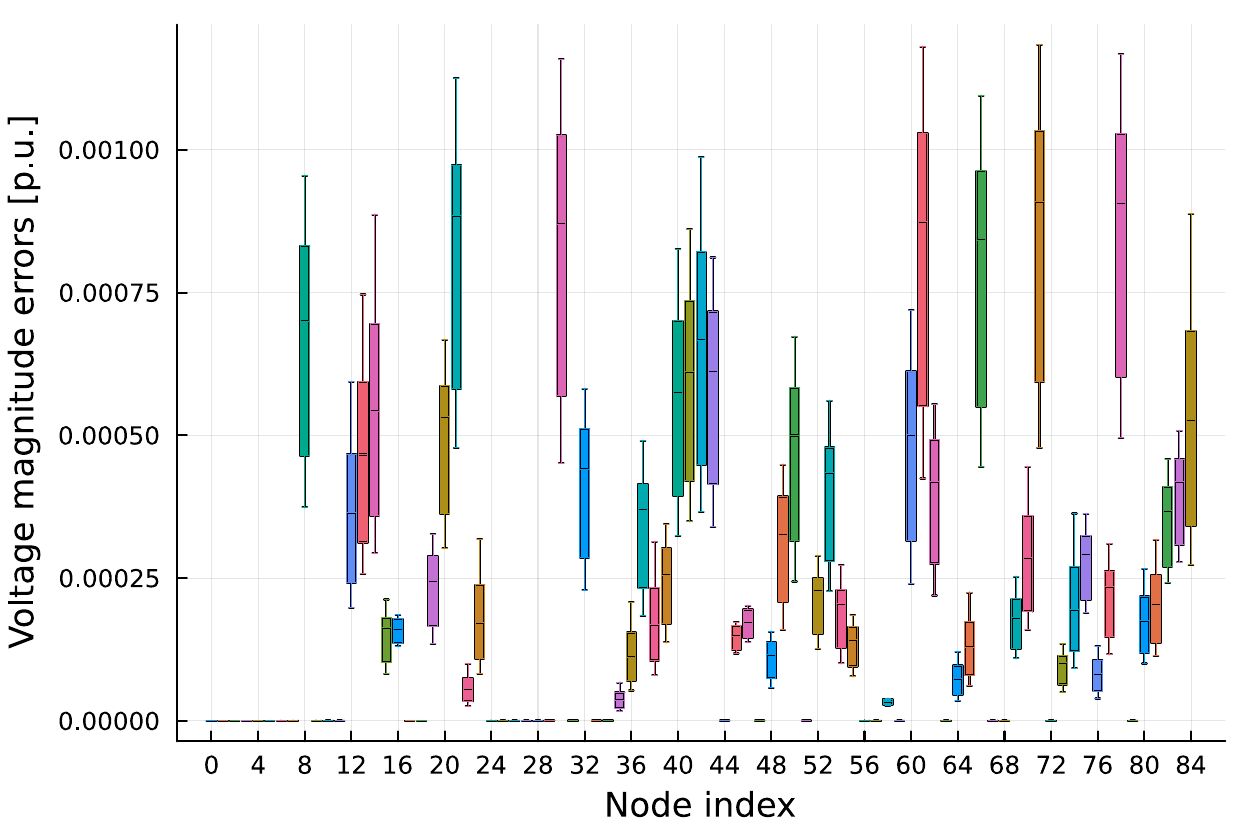
    }
    \caption{Voltage magnitude errors of DDPF in terms of differences to the
        model-based power flow using full topology information. The plot displays
        error statistics over all test set data points for a $60\%$ reduction instance
        of the 85-node network and across all of its 85 nodes.}
    \label{fig:case85_all_nodes}
\end{figure}

\section{CONCLUSIONS AND FUTURE WORK} \label{sec: conclusion}
This paper presented a novel framework for data-driven power flow based on
trajectories of the nonlinear DistFlow model. We have introduced and analyzed an exact data-driven surrogate of the DistFlow model. We have also shown how this model can be combined with known conic relaxation techniques.
To avoid depending on measurements at every node, we also investigate how to obtain a Kron-reduced radial model.
In particular, we addressed the problem of limited
sensor budgets through the offline solution of an
optimal network reduction problem that determines the subset of nodes to
be equipped with sensors.

The preliminary results herein are promising and offer numerous future research directions. In particular, extending the DDPF method to account for measurement noise and network reduction errors, adapting it to unbalanced distribution feeders, and leveraging the proposed framework for real-time control of distributed energy resources under sparse observability are of immediate interest.
Another direction is to assess the robustness of DDPF and Opti-KRON under different unseen loading conditions.

\section{ACKNOWLEDGMENTS}
The authors would like to acknowledge the Fourth Champéry Power Conference as the venue where the initial idea for this work was conceived.

\bibliographystyle{ieeetr}
\bibliography{literature}
\end{document}